\numberwithin{equation}{section}
\newcommand{\ie}{\emph{i.e.}}
\newcommand{\cf}{\emph{cf}}
\newcommand{\e}{\mathrm{e}}
\newcommand{\R}{\mathbb{R}}
\newcommand{\Z}{\mathbb{Z}}
\newtheorem{claim}{Claim}[section]
\newtheorem{theorem}[claim]{Theorem}
\theoremstyle{definition}
\newtheorem{remark}[claim]{Remark}
\newtheorem{remarks}[claim]{Remarks}
\definecolor{DarkGreen}{rgb}{0,0.5,0.1}
\newcommand\soutD{\bgroup\markoverwith
{\textcolor{DarkGreen}{\rule[.5ex]{2pt}{1pt}}}\ULon}
\newcommand\soutB{\bgroup\markoverwith
{\textcolor{blue}{\rule[.5ex]{2pt}{1pt}}}\ULon}
\newcommand{\Hm}[1]{\leavevmode{\marginpar{\tiny%
$\hbox to 0mm{\hspace*{-0.5mm}$\leftarrow$\hss}%
\vcenter{\vrule depth 0.1mm height 0.1mm width \the\marginparwidth}%
\hbox to 0mm{\hss$\rightarrow$\hspace*{-0.5mm}}$\\\relax\raggedright
#1}}}
\begin{document}

\title{Spectral asymptotics induced by approaching and diverging planar circles }

\author{Sylwia Kondej}

\date{
\small \emph{
\begin{quote}
\begin{itemize}
Institute of Physics, University of Zielona G\'ora, ul.\ Szafrana
4a, 65246 Zielona G\'ora, Poland; s.kondej@if.uz.zgora.pl
\end{itemize}
\end{quote}
}
\medskip
\date{
\small \emph{
\begin{quote}
\begin{itemize}
Institute of Physics, University of Zielona G\'ora, ul.\ Szafrana
4a, 65246 Zielona G\'ora, Poland; s.kondej@if.uz.zgora.pl
\end{itemize}
\end{quote}
}
\medskip
29 March  2016}
%
}
\maketitle

\begin{center}
\emph{Dedicated to Pavel Exner on the occasion of his 70$^{th}$
birthday.}
\end{center}

%
\begin{abstract}
\noindent We consider two dimensional system governed by the Hamiltonian
with  delta
interaction supported by two concentric circles
separated by distance $d$. We analyze the asymptotics of
the discrete eigenvalues  for $d \to 0$ as well as for $d\to \infty$.
\end{abstract}
%



\section{Introduction}
The paper belongs to research often called Schr\"odinger operators with delta potentials.
We study a special model: two dimensional quantum system with delta potential supported by two concentric
circles: $C_{R}$ and $C_{R_d}$, where $C_R: =\{(x,y)\in \R^2\,:\, (x^2+y^2)^{1/2} =R \}$ and $C_{R_d} $
is defined analogously for
$R_d := R+d$, $d>0$. The Hamiltonian of such system can be symbolically written as
\begin{equation}\label{eq-delta}
  -\Delta -\beta  \delta_{C_{R}}- \alpha  \delta_{C_{R_d}}\,,\quad \mathrm{where} \quad  \alpha ,\,\beta \in \R\,,
\end{equation}
where   $\delta_{C_r}$ stands for the Dirac delta
supported on  $C_r$. To define a self-adjoint operator $H_{\alpha, \beta, d}$ coresponding formally to (\ref{eq-delta}) we employ the form sum method.
\\
\emph{The main results.} We investigate the behaviour of the discrete eigenvalues for $d\to 0$ and $d\to \infty $. In fact, in both asymptotics one can observe certain  "spectral memory" on  single circle system.
Therefore, it is convenient to introduce a special notation $H_{\gamma , R}$ for the Hamiltonian corresponding to
the formal expression
\begin{equation}\label{eq-delta1}
  -\Delta -\gamma   \delta_{C_{R}}\,,\quad \gamma  \in \R\,.
\end{equation}
In the following $\gamma $ will be expressed by means of the coupling constants $\alpha $ and $\beta $, however,
this dependence will be different in two considered cases.
If $\gamma >0$ then operator $H_{\gamma , R}$ has $2M_{\gamma, R} +1$ eigenvalues (counting multiplicity), where $M_{\gamma, R} :=\max \{m\in \Z \,:\, 2|m|<R \gamma  \}$.

The first result concerns the eigenvalue asymptotics in the
approaching circles system and the statement can be formulated as follows.

\medskip

$\bullet$ Let $E_m $ denote an eigenvalue of $H_{\alpha +\beta , R }$. Then the  eigenvalues of $H_{\alpha ,\beta , d}$ admit the following asymptotics
$$
E_m +t_m d +o(d)\,,
$$
for $d\to 0$. The explicit form for the first correction term $t_m$ is derived  in Theorem~\ref{th-1}. The analysed  system enables separation of variables and, consequently, relying on the implicit function theorem we  can reproduce $t_m$ in the terms of the Bessel functions and their derivatives. In~Section~\ref{sec-disc} we also study certain  properties of $t_m$; for example,
 we show that the sign of $t_m$ is  not defined generally.

\medskip

The second result is addressed to the system with circles separated by a large distance.

\medskip

$\bullet$ Assume that $d\to \infty $. Then the system has
a "tendency for the decoupling".  This is manifested  as the localization of eigenvalues of $H_{\alpha, \beta, d}$ near the eigenvalues of $H_{\beta , R}$ as well as $H_{\alpha, R_d}$. Precisely, the eigenvalues of $H_{\alpha, \beta, d}$ behave as
\begin{equation}\label{eq-asymdiv1} 
 \begin{cases}
    -\frac{\alpha^2}{4} + \frac{m^2- \frac{1}{4}}{d^2}\,  +o(d^{-2}  ) \,, & |m|\leq M_{\alpha, R_d} \,,  \\
     E_{m, \beta}   +w_m   \,\varepsilon +o(\varepsilon  ) \,, & |m|\leq M_{\beta , R}   \,,
  \end{cases}
\end{equation}
   where
   $
   \varepsilon := \exp (-2 d \kappa_{m,\beta } )
   $ and  $E_{m, \beta }$ stand for the eigenvalues of $H_{\beta , R}$. Note that the expression in the
   first line of
   (\ref{eq-asymdiv1}) reflects the asymptotics of eigenvalues of $H_{\alpha, R_d}$.
\\ \\
The models of delta interactions supported by circles or spheres has been already studied in the various contexts and dimensions,  see for example  \cite{AGS}, \cite{BT},  \cite{E}, \cite{EF}, \cite{EF08}, \cite{ET}, \cite{K12}, \cite{Shabani}.


\section{Preliminaries and the main  result} \label{sec-preliminaries}

\noindent \emph{Single ring: spectral properties of the system.}
 Spectral properties of the single circle Hamiltonian will
be essential for the  both asymptotics considered in this paper. Therefore, we start  our analysis from  recalling  some useful known facts, \cf~\cite{ET}.
Consider the Hamiltonian $H_{\gamma, R}$ associated to the sesquilinear form
$$
h_{\gamma, R} (f,g) = (\nabla f, \nabla g )_{L^2 (\R^2)} - \gamma \int_{C_R}\bar{f}g\,\mathrm{d}s  \,,\quad f, g \in W^{1,2}(\R^2)\,, \quad \gamma \in \R\,;
$$
where the functions in the second component are understood in the sense of the trace embedding $W^{1,2}(\R^2)\hookrightarrow L^2 (C_R)$ and the arc length parameter $s$ ranges $s\in [0, 2\pi R)$. In fact, $L^2 (C_R) $ can be identify with $L^2 ((0, 2\pi R))$.
Applying the results of \cite{BEKS} we conclude that the operator  $H_{\gamma , R}$ associated to $h_{\gamma, R}$ via the first representation theorem is self-adjoint. $H_{\gamma, R}$  gives a mathematical meaning to the formal expression (\ref{eq-delta}).

To be specific we introduce the polar system of coordinates $(r,\phi  )$ where  $r >0 $, $\phi \in [0, 2\pi )$
\footnote{In view of the above trace embedding operator it seems natural to consider $s$ parameter instead of $\phi $; however, since we going to implement the second circle it is more convenient to stay with standard polar coordinates.}.
The delta potential support $C_R$ decomposes $\R^2$ onto two disjoint open sets $\Omega^i$, $\Omega^e$; denote by $\bar\Omega^i$, $\bar \Omega^e$ their closures and $\mathcal{C}^1_R := C^1 (\bar\Omega^i  )\cup  C^1 (\bar \Omega^e )$. Assume that  $f\in \mathcal{C}^1_R  $ satisfies
\begin{equation}\label{eq-bcC1}
\begin{aligned} &
\lim_{r\to R^+} f (r,\phi  )=  \lim_{r\to R^-} f (r, \phi ) =:  f_R  (\phi ) 
\,, \\ &
\lim_{r\to R^+} \partial_r f (r,\phi )- \lim_{r\to R^-}\partial_r f (r,\phi )= -\gamma \,f_R (\phi )\,.
\end{aligned}
\end{equation}
Then the operator which acts as
\begin{equation}\label{eq-Ham1}
\check{ H}_{\gamma, R} f = -\Delta f
  \qquad \mbox{a.e.\ in } \R^2
  \,,
  \end{equation}
  on the domain
\begin{equation}
   D(\check{ H}_{\gamma, R})= \left\{
  f \in \mathcal{C}^1_R \cap W^{2,2}(\R^2 \setminus C_R)
  : f \mbox{ satisfies~\eqref{eq-bcC1}}
  \right\}
  \,
\end{equation}
is essentially self-adjoint  and its closure coincides with $H_{\gamma , R}$, \cf~\cite{BEKS}.
%

Since the delta potential is compactly supported the essential spectrum of $H_{\gamma , R}$
is stable under such ``perturbant``
, \ie
$$
\sigma_{\mathrm{ess}} (H_{\gamma , R}) = [0, \infty )\,,
$$
\cf~\cite{BEKS}.

Henceforth we will be interested in negative eigenvalues.
In view of the rotational symmetry 
 we  postulate that the  eigenfunctions
of $H_{\gamma, R}$ take the form
 $ \frac{1}{\sqrt{2\pi }}\varrho_m
(r)\e^{im \phi  }$, where $m\in \Z$.  Let $\kappa >0$.
The behaviour of eigenfunctions  at the infinity and origin imposes
\begin{align} \label{eq-bcC3} && \varrho _m
(r) = c_1 K_m (\kappa r) \,,\,\,\,\, \mathrm{for} \,\,\,\, r>R \,, \\
\nonumber && \varrho _m (r) = c_2 I_m (\kappa r)\,,\,\,\,\,
\mathrm{for} \,\,\,\, r<R\,,
\end{align}
where
$K_m (\cdot )$ and $I_m (\cdot)$ denote the modified Bessel functions,
\cf~\cite{AS}.
Using the boundary conditions (\ref{eq-bcC1}) we get the following spectral condition
\begin{equation}\label{eq-speq1}
K_{m}(\kappa R) I_{m}(\kappa R) = \frac{1}{\gamma R}\,,\quad m\in \Z\,,
\end{equation}
\cf~\cite{ET}.

It follows from (\ref{eq-bcC3}) that $\kappa $ determines the spectral parameter
and the solutions of (\ref{eq-speq1}) reproduce negative
eigenvalues $E$ of $H_{\gamma , R}$ by means of the relation $E=-\kappa^2$.

\begin{remarks} \label{rem-1}
A. Relying on asymptotics formulae (\ref{eq-asyminftyI})-(\ref{eq-asym0K}) and using the fact that $(K_m I_m )(\cdot )$ is monotonously decreasing we state
that the equation (\ref{eq-speq1}) has exactly one solution for $\kappa >0$ provided $2|m|<R\gamma $ or equivalently~$|m| \leq M_{\gamma , R}$ and no solution otherwise; recall that the notation $M_{\gamma, R}$ was introduced in introduction.
\\
B. It is also useful  to recall     that for $\gamma R  $ large the solution of (\ref{eq-speq1})
behaves as
$$
-\kappa_m^2 = -\frac{\gamma ^2}{4}+\frac{m^2 - \frac{1}{4}}{R^2} +O(\gamma ^{-2}R^{-4})\,.
$$
\end{remarks}

\emph{Hamiltonian with the delta potential supported by
two concentric rings.}  Let $s\in [0, 2\pi R)$ and $s_d\in [0, 2\pi R_d )$
stand for the arc length parameters associated to $C_R$ and $C_{R_d}$ respectively. For  $\alpha ,\beta \in \R$ let us define the following sesquilinear form
$$
h_{\alpha, \beta ,d} (f,g) = (\nabla f, \nabla g )_{L^2 (\R^2)} - \beta \int_{C_R}\bar{f}g\,\mathrm{d}s - \alpha \int_{C_{R_d}}
\bar{f}g
\mathrm{d}s_d \,,\quad f, g \in W^{1,2}(\R^2)\,,
$$
where we employ the trace  embedding of $W^{1,2}(\R^2)$ to $L^2 (C_r) \simeq L^2 ((0, 2\pi r))$, $r=R, R_d$.
Similarly as for the single circle case we define the operator  $H_{\alpha, \beta ,d}$ associated to $h_{\alpha, \beta ,d }$ via the first representation theorem.

Analogously for the single circle we can characterize $H_{\alpha, \beta, d}$ by means of boundary conditions.
Note that circles $C_R$ and $C_{R_d}$ decompose   $\R^2$ onto three  open sets $\Omega^i_1$,  $\Omega^i_2$  and  $\Omega^e$. Denote  $\mathcal{C}^1_{R, R_d} := C^1 (\bar\Omega^i_1 )\cup C^1 (\bar\Omega^i_2 )\cup C^1 (\bar \Omega^e )$ and assume that  $f\in \mathcal{C}^1_{R, R_d} $ satisfies
\begin{equation}\label{eq-bcC2}
\begin{aligned}
&\lim_{r\to R^+} f(r,\phi ) =  \lim_{r\to R^-} f (r,\phi  ) =: f_R (\phi )\,, \\
& \lim_{r\to R^+}\partial_r f (r,\phi ) - \lim_{r\to R^-}\partial_r f (r,\phi  )= -\beta f_R (\phi  )\,,\\
& \lim_{r\to R_d^ +} f (r,\phi ) =  \lim_{r\to R_d^-} f ( r,\phi  ) =: f_{R_d} (\phi )\,,\\ 
& \lim_{r\to R_d^+}\partial_r f (r,\phi )- \lim_{r\to R_d^-}\partial_r f (r,\phi )= -\alpha   \,f_{R_d}
(\phi )\,.
\end{aligned}
\end{equation}
In fact, $H_{\alpha, \beta, d }$ stands for the closure
\begin{equation}\label{eq-Ham1}
\begin{aligned}
  \check{H}_{\alpha, \beta, d } f  &= -\Delta f
  \qquad \mbox{a.e.\ in } \R^2
  \,,
  \\
  D( \check{H}_{\alpha, \beta, d } ) &= \left\{
  f \in \mathcal{C}^1_{R, R_d}\cap  W^{2,2}(\R^2 \setminus ( C_R \cup C_{R_d}) \big)
  : f \mbox{ satisfies~\eqref{eq-bcC2}}
  \right\}
  \,.
\end{aligned}
\end{equation}

\subsection{Spectral equation for the double ring system}
To derive  the spectral equation for the double ring system we proceed analogously as in the previous
case.
 The system again admits separation variables.
Consequently,  the eigenfunctions of $H_{\alpha, \beta, d}$ can be written
as  $ \frac{1}{\sqrt{2\pi }}\rho _m (r)\e^{im\phi }$,  where $m\in \Z$ and
\begin{eqnarray} \nonumber
&&\rho _m (r) = C_1 K_m (\kappa r) \,,\,\,\,\, \mathrm{for}
\,\,\,\, r>R_d \,,\\ \nonumber && \rho _m (r) = C_2 K_m
(\kappa r) +C_3 I_m (\kappa r)\,,\,\,\,\,\mathrm{for} \,\,\,\,
R<r<R_d
\end{eqnarray}
and
$$\rho _m (r) = C_4 I_m (\kappa r) \,,\,\,\,\, \mathrm{for} \,\,\,\,
r<R \,;
$$
\\
Inserting the above formulae to (\ref{eq-bcC2})  we obtain four equations
\begin{eqnarray} \nonumber
&& C_1 K_m (\kappa R_d  ) - C_2 K_m (\kappa R_d  ) -C_3 I_m
(\kappa R_d   ) =0\,,\\ \nonumber && C_1 (\kappa K'_m (\kappa R_d
)+\alpha K_m (\kappa R_d   )) -C_2
\kappa K'_m (\kappa R_d  )-C_3 \kappa I'_m (\kappa R_d   ) =0\,,\\
\nonumber && C_2 K_m (\kappa R  )  +C_3 I_m (\kappa R  ) - C_4 I_m
(\kappa R  )=0 \,,\\ \nonumber && C_2 \kappa K'_m (\kappa R  )+
C_3 \kappa I'_m (\kappa R  ) +\ C_2(\beta  I_m (\kappa R  ) -
\kappa I'_m (\kappa R  ))=0\,.
\end{eqnarray}
\emph{Spectral equation.} The above system of equations admits a
solution $iff$ the determinant of the corresponding
 matrix vanishes. This condition can be written by means of
  the following equation
\begin{equation}\label{eq-speqC}
  \eta _m (\kappa , d) = 0\,,\quad m\in \Z\,,\,\kappa >0\,,\,d\geq 0\,,
\end{equation}
where
\begin{eqnarray}\label{eq-defeta}
  \eta_m  (\kappa , d) =   \nu_{m}(\kappa , d )
- \xi_{m, \alpha } (\kappa ) \xi_{m, \beta}  (\kappa ) \,,
\end{eqnarray}
with  \begin{equation}\label{eq-xi}
\xi_{m, \alpha } (\kappa ) \equiv \xi_{m, \alpha , d } (\kappa ):= \alpha R_d (K_m I_m )(\kappa R_d ) - 1\,, \quad
\xi_{m, \beta}  (\kappa ):= \beta R (K_m I_m )(\kappa R )-1\,,
\end{equation}
and
\begin{equation}\label{eq-nu}
  \nu_{m}(\kappa , d ):=  \alpha \beta R_d R\, K_m^2(\kappa R_d)   I_m ^2 (\kappa R  )
  \,.
\end{equation}
Formulae~(\ref{eq-speqC})  constitute the spectral equations for
$H_{\alpha, \beta , d}$.

\begin{remark} Note that the functions $\xi_{m, \tau} $, where $\tau =\alpha, \beta $ are related to the single circle systems. More precisely, the relations
 $$
\xi_{m, \tau}(\kappa ) = 0 \,,
$$
determine spectral equations of $H_{\alpha , R_d }$ and $H_{\beta, R}$.
\end{remark}

\section{Approaching  rings }

In this section we consider the eigenvalue asymptotics for $d \to 0$ and $\alpha +\beta >0$.
Note that for $d =0 $ equation (\ref{eq-speqC}) reads
\begin{equation}\label{eq-speqC0}
K_m(\kappa R )
  I_m (\kappa R  ) =\frac{1}{(\alpha +\beta )R}\,;
\end{equation}
the latter  corresponds to the single ring Hamiltonian $H_{\gamma , R}$ with coupling constant
$\gamma =\alpha +\beta $, \cf~(\ref{eq-speq1}).   As follows from the previous discussion, see Remark~\ref{rem-1},  equation (\ref{eq-speqC0}) has exactly one solution $\kappa_m$ provided $|m| \leq  M_{\alpha +\beta, R}$.

The following theorem provides  the spectral asymptotics for approaching rings.
\begin{theorem} \label{th-1}
Assume $\gamma= \alpha +\beta >0$.
Let $E_m$, where   $|m| \leq  M_{\gamma , R}$, stand for an  eigenvalue of $H_{\gamma, R}$.
Then the eigenvalues of $H_{\alpha, \beta, d}$ admit the following asymptotics for $d\to 0$:
\begin{equation}\label{eq-asymC}
  E_{m} (d)= E_m  +t_m\,d
  +o(d)\,,
\end{equation}
   where
  $t_m$  is given by
  \begin{equation}\label{eq-defS}
  t_m := \frac{2 \kappa_m I_m K_m  \left(-\alpha \beta R  I_m K_m   +\alpha \kappa_m
R (I_m K_m )'  +\alpha
 I_m K_m \right)}{R(I_m K_m )'}\,;
\end{equation}
moreover, functions $K_m(\cdot )$ and $I_m(\cdot )$  as well as their derivatives
contributing to   (\ref{eq-defS}) are defined for the value $R\kappa_m$.
\end{theorem}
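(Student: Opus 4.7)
The plan is to apply the implicit function theorem to the spectral equation $\eta_m(\kappa,d)=0$ derived in \eqref{eq-speqC}. First I would check that at $d=0$ (so $R_d=R$) the formula \eqref{eq-defeta} collapses: expanding the product $\xi_{m,\alpha}\xi_{m,\beta}$ the $\alpha\beta R^2(K_mI_m)^2$ term cancels against $\nu_m(\kappa,0)$, leaving
\[
\eta_m(\kappa,0)=(\alpha+\beta)R\,(K_m I_m)(\kappa R)-1,
\]
so that the $d=0$ spectral condition is precisely \eqref{eq-speqC0} with $\gamma=\alpha+\beta$, whose positive solutions $\kappa_m$ give the eigenvalues $E_m=-\kappa_m^2$ of $H_{\gamma,R}$ for $|m|\le M_{\gamma,R}$. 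Since $K_m I_m$ is strictly decreasing on $(0,\infty)$ (\cf~Remark~\ref{rem-1}A), $\partial_\kappa\eta_m(\kappa_m,0)=(\alpha+\beta)R^2(K_mI_m)'(R\kappa_m)\neq 0$, so the nondegeneracy hypothesis of the implicit function theorem is satisfied.

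The implicit function theorem then yields a $C^1$ curve $d\mapsto\kappa(d)$ with $\kappa(0)=\kappa_m$ and $\eta_m(\kappa(d),d)=0$ in a neighbourhood of $d=0$. Differentiating this identity at $d=0$ gives $\kappa'(0)=-\partial_d\eta_m(\kappa_m,0)/\partial_\kappa\eta_m(\kappa_m,0)$, and since $E_m(d)=-\kappa(d)^2$ one obtains the expansion \eqref{eq-asymC} with $t_m=-2\kappa_m\kappa'(0)=2\kappa_m\,\partial_d\eta_m/\partial_\kappa\eta_m\,|_{(\kappa_m,0)}$ and in fact $O(d^2)$ remainder. That $\kappa(d)$ actually parametrises a genuine eigenvalue (not a spurious root of the determinant) comes from the eigenfunction construction recalled in Section~\ref{sec-preliminaries}.

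The main labour, and I expect the principal obstacle, is the algebraic identification of the ratio $2\kappa_m\,\partial_d\eta_m/\partial_\kappa\eta_m$ with the explicit expression \eqref{eq-defS}. I would compute $\partial_d\eta_m=\partial_d\nu_m-(\partial_d\xi_{m,\alpha})\,\xi_{m,\beta}$ at $(\kappa_m,0)$ directly from \eqref{eq-xi}--\eqref{eq-nu}, observing that the $d$-derivative acts only through $R_d=R+d$, so by the chain rule it produces both the $\alpha K_m I_m$ type boundary terms and the $\kappa_m R$ multiplied derivatives. Two ingredients then drive the simplification: first, the constraint $(\alpha+\beta)R(K_m I_m)(R\kappa_m)=1$, which crucially gives $\xi_{m,\alpha}(\kappa_m)|_{d=0}=-\beta R K_m I_m$ and $\xi_{m,\beta}(\kappa_m)=-\alpha R K_m I_m$; second, the Wronskian identity $I_m(x)K_m'(x)-I_m'(x)K_m(x)=-1/x$ evaluated at $x=R\kappa_m$, which is the precise cancellation needed to absorb the asymmetric contribution $2\alpha\beta R^2\kappa_m I_m^2 K_m K_m'$ coming from $\partial_d\nu_m$ into the symmetric derivative $(K_m I_m)'$ together with a bare constant. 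After these substitutions the numerator collapses to $2\kappa_m K_m I_m\bigl(-\alpha\beta R K_m I_m+\alpha\kappa_m R(K_m I_m)'+\alpha K_m I_m\bigr)$ and the denominator to $R(K_m I_m)'$, matching \eqref{eq-defS} exactly.
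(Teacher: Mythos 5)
Your proposal is correct and follows essentially the same route as the paper: reduce to the spectral equation $\eta_m(\kappa,d)=0$, verify it collapses at $d=0$ to the single-circle condition with coupling $\alpha+\beta$, apply the implicit function theorem using the strict monotonicity of $K_mI_m$ for nondegeneracy, and identify $t_m=2\kappa_m\,\partial_d\eta_m/\partial_\kappa\eta_m$ via the Wronskian identity and the constraint $(\alpha+\beta)R(K_mI_m)(R\kappa_m)=1$. The algebraic simplifications you outline (in particular $\xi_{m,\beta}(\kappa_m)=-\alpha R K_mI_m$ and the absorption of the $2\alpha\beta R^2\kappa_m I_m^2K_mK_m'$ term) reproduce the paper's computation of $\partial_d\eta_m$ exactly.
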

\begin{proof}
Suppose $|m|\leq M_{\gamma, R}$.
 Eigenvalues of $H_{\alpha, \beta, d}$ are determined by the solutions of  (\ref{eq-speqC}).  Note that
$$
\eta_m (\kappa_m , 0) = 0\,.
$$
Using the regularity of $K_m$ and $I_m$ we state for $d \in \R$ and $\kappa >0$ the functions
$ \frac{\partial \eta_m
}{\partial \kappa }  $ and $ \frac{\partial \eta_m
}{\partial d }  $ are  $C^\infty $. Furthermore,
using (\ref{eq-speqC0}) we
get
\begin{equation}\label{eq-der2}
\frac{\partial \eta_m  }{\partial \kappa } = (\alpha +\beta ) R^2
(I_m K_m )' = R \frac{(I_m K_m  )'}{(I_m K_m )} \,,
\end{equation}
where the derivative at the left hand side is defined at $(\kappa_m, 0)$. Moreover,  $Z_m  =Z_m (R\kappa_m)$, $Z_m= K_m \,, I_m$ and the analogous notation is
applied for the derivatives contributing to right hand side of
the above
expression. Since the function $(I_m K_m) (\cdot )$ is monotonously decreasing we
have $\frac{\partial \eta_m  }{\partial \kappa } <0$.
Consequently, we can employ the implicit function theorem which states that there exists a neighbourhood
$U\in \R$ of $0$ and the unique function $U \ni d\mapsto \kappa_m (d) \in \R$ such that $\eta_m (\kappa_{m}(d ) , d)=0$ and
\begin{equation}\label{eq-expkappa}
\kappa _{m} (d)= \kappa_m  -\left(\frac{\partial
\eta_m }{\partial d }\right) \left(  \frac{\partial \eta_m
}{\partial \kappa } \right)^{-1} \, d +o(d)\,,
\end{equation}
where all derivatives in the second component are determined for
$d =0$, $\kappa =\kappa_m$. \\
Using (\ref{eq-defeta}) and the Wroskian equation
\begin{equation}\label{eq-wron}
(I'_mK_m  )(z) - (K_m'I_m)(z)=\frac{1}{z}\,
\end{equation}
we get by a straightforward calculation
\begin{eqnarray} &&
\frac{\partial \eta_m }{\partial d }=
\\ \nonumber && -\alpha \beta R (K_m I_m) +\alpha
\kappa_m R  (I_m K_m  )' +\alpha (I_m K_m ) \,.
\end{eqnarray} 
Combining  the above derivatives together with (\ref{eq-expkappa}) we arrive at
\begin{eqnarray}\label{eq-expev}
&&  E_{m} (d)  =  -\kappa_{m} (d) ^2 = \\
\nonumber &&
 -\kappa_m ^2 +2 \kappa_m  \frac{\partial \eta_m
}{\partial d}\left(  \frac{\partial \eta_m }{\partial
\kappa } \right)^{-1} d +o(d ) =  E_m  + t_m
d +o(d)\,,
\end{eqnarray}
with $t_m$  given by (\ref{eq-defS}).
\end{proof}

\subsection{ Discussion on the first order correction} \label{sec-disc}

In this section we discuss some properties of the first order correction  for converging rings.\\

\emph{The first order correction in the terms of unperturbed eigenfunctions.} The following analysis will be conducted for $m=0$.  Let $f_0$ stand for the ground state of $H_{\alpha+ \beta, R }$. Using  (\ref{eq-bcC1})
and (\ref{eq-bcC3}) we conclude  that $c_1 = \frac{I_0 }{K_0 }c_2$; recall $K_0  =K_0
(R\kappa_0 )$, $I_0 = I_0
(R\kappa_0 )$ and the analogous notation is employed   for derivatives.   Applying  the relation
$$
\int x Z_0^2 (x)\mathrm{d}x= \frac{x^2}{2} (Z_0 ^2 (x)-(Z_0' (x)) ^2\,,\quad Z_0=I_0\,, K_0
\,,
$$
one can show that  the norm of eigenfunction $ f_0$ is
given by
$$
\| f_0 \|^2_{L^2 (\R^2)} = | c_2|^2 \frac{R^2}{2 K_0^2} ((IK')^2 - (K I')^2)\,.
$$
Using again the Wronskian equation (\ref{eq-wron}) one gets
$$
\| f_0 \|^2_{L^2 (\R^2)} =-| c_2|^2\frac{R}{2\kappa_0  K_0 ^2 }
(I_0K_0)'\,.
$$
Applying  the above formula   
together with  boundary conditions (\ref{eq-bcC1}) and
comparing this with (\ref{eq-defS}) we obtain
\begin{equation}\label{eq-defS2}
t_0=\frac{-\alpha (\int_{C_R}\partial ^+_r|f_0 |^2\mathrm{d}s
+\alpha  \int_{C_R}|f_0 |^2 \mathrm{d}s
)-\frac{\alpha}{R}\int_{C_R}|f_0 |^2\mathrm{d}s }{\|f_0
\|_{L^2 (\R^2)}^2}\,,
\end{equation}
where we abbreviate
$\partial ^+_r f(r,\phi) = \lim_{r\to R^+}\partial_r f(r,\phi) $; recall that the equation $s=R\phi $ states the relation
the relation between $\phi \in [0,2\pi)$ and $s \in [0,2\pi R)$.

Let us mention that  the above formula  describes a very particular case of the class considered in the  forthcoming paper~\cite{KK16}. In this paper the  spectral asymptotics for approaching hypersurfaces in $\R^d$ is analyzed.
The method developed  in \cite{KK16} allows to reconstruct asymptotics of eigenvalues by means of the ''unperturbed`` eigenfunctions. The technics enables generalization for
complex coupling constants.

The last component of (\ref{eq-defS2}) reflects   contribution of the curvature to the first correction term.
More general situation shows  a presence of the first  mean curvature in eigenvalue asymptotics,~\cf~\cite{KK16}.
Furthermore, let us note that a contribution of the first mean curvature
in spectral asymptotics has been recently showed
in related problems,
see~\cite{Kr09}, \cite{K10} and~\cite{Pankrashkin-Popoff_2015b}.

The second component of (\ref{eq-defS2}) is a consequence of singular character of delta
potential. Suppose $f_0$ and $f_d$ denote the normalized ground states of $H_{\alpha +\beta , R}$ and $H_{\alpha, \beta , d}$, respectively. The second component of (\ref{eq-defS2}) comes directly from the fact that $\partial_r (f_0  - f_d) (r,\phi )  $ do not tend to $0$
if $d\to 0$ and $r\in (R, R_d)$.
\\ \\
 \emph{The sign of $t_m$.}  For the one dimensional system with two  converging points of interaction
the first order correction is always \emph{positive}, \cf~\cite{AGHH}. This means
that the splitting of the singular potential from one point to two points
leads to pushing up the
eigenvalue.
The situation is slightly different in the case of converging circles. As formula (\ref{eq-defS}) shows  the
sign of $t_0$ depends on
$$
\varsigma := \alpha (1 - \beta R )I_0K_0 +\alpha \kappa_0 R(I_0 K_0)'\,,
$$
\ie~$\mathrm{sign} \,\varsigma  = -\mathrm{sign}\, t_0 $.

$\bullet$ First, let us consider the situation when $R\to 0 $; then $\kappa_0 \to 0$ as well, \cf.~\cite{ET}. Employing
asymptotics formulae for $Z_0$, where $Z_0 =I_0 , K_0$, see (\ref{eq-asym0I}) and (\ref{eq-asym0K})
together with (\ref{eq-recur1})-(\ref{eq-Ki})  one gets
$$
\varsigma \sim - \alpha (1-\beta R ) \ln (\kappa_0 R)\,,
$$
which implies $\varsigma >0$  for $R$ small enough and, consequently, leads to  $t_0 <0$.

$\bullet$ Now we assume that $R\to \infty $. Then $\kappa_0 \sim \alpha /2 $ and using again
formulae (\ref{eq-asym0I}) and (\ref{eq-asym0K})    we obtain:
\begin{equation}\label{eq-}
  \varsigma =  - \alpha \beta R \left(\frac{1}{2\kappa_0 R}+ O\left((\kappa_0 R)^{-3} \right)\right)\,.
\end{equation}
This shows that for $R$ large enough we have $\varsigma <0$ and $t_0 >0$.
\\
The above discussion establishes that  the sign of the first order correction
term is generally undefined.

\section{Diverging rings}

In this section we consider the asymptotics for circles separated by a large distance, \ie~for $d\to \infty$.


Following the convention introduced in the previous discussion we denote by $H_{\alpha, R_d}$~and~$H_{\beta , R}$ the corresponding  single circle Hamiltonians.  Operator $H_{\alpha , R_d}$ has $2M_{\alpha, R_d} +1$ (counting multiplicities) eigenvalues $\{E_{m, \alpha }\}_{|m|\leq M_{\alpha, R_d} }$ and
$H_{\beta, R }$ has $2M_{\beta, R} +1$ eigenvalues $\{E_{m, \beta }\}_{|m|\leq M_{\beta, R} }$. Suppose $\tau = \alpha, \beta $. In fact, $E_{m, \tau}$  can be recovered from the spectral equations, \ie~$E_{m, \tau}= - \kappa_{m, \tau}^2 $ where  $\kappa_{m,\tau }$ stand for the solutions of
$$
\xi_{m, \tau  }(\kappa )= 0\,,\quad 
\mbox{  for   }\kappa >0\,.
$$
Recall that   $\xi_{m,\tau }$ are defined by (\ref{eq-xi}).
Moreover, using the statement of Remark~\ref{rem-1} we conclude that
 \begin{equation}\label{eq-asympinftyev}
 E_{m, \alpha }=   -\frac{\alpha^2}{4} + \frac{m^2- \frac{1}{4}}{R_d^2}\,  +O(d^{-4}  )\,.
 \end{equation}

\begin{theorem} \label{th-2} Assume that $\alpha $ and $\beta $ are positive and $E_{m, \beta } \neq -\frac{\alpha^2}{4}$ for
all $|m|\leq M_{\beta , R}$. Then the eigenvalues  of
$H_{\alpha, \beta ,d}$ admit the following asymptotics for $d\to \infty$:
\begin{equation}\label{eq-efHepsilon}
 \epsilon_d =  \begin{cases}
    -\frac{\alpha^2}{4} + \frac{m^2- \frac{1}{4}}{d^2}\,  +o(d^{-2}  ) \,, & |m|\leq M_{\alpha, R_d } \,,  \\
    E_{m, \beta}   +w_m   \,\varepsilon +o(\varepsilon  ) \,, & |m|\leq M_{\beta, R}   \,,
  \end{cases}
\end{equation}
   where
   \begin{equation}\label{eq-w}
   \varepsilon := \exp (-2 d \kappa_{m,\beta } )\,,\quad \quad 
  w_m :=   \frac{\pi \alpha \beta R\e^{-2\kappa_{m,\beta } R} I_m (R\kappa _{m, \beta })^2 }{ (1  -\alpha /(2\kappa _{m, \beta }))\xi ' _{m,\beta }(\kappa _{m ,\beta} ) }\,.
   \end{equation}
   \end{theorem}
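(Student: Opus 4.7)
The plan is to analyze the single spectral equation $\eta_m(\kappa,d)=0$ separately in two regimes, corresponding to the two families of eigenvalues described in \eqref{eq-efHepsilon}. The structural observation driving the proof is that
\[
\nu_m(\kappa,d)=\alpha\beta R_d R\,K_m^2(\kappa R_d)I_m^2(\kappa R)
\]
is exponentially small as $d\to\infty$ (since $K_m(z)^2\sim\frac{\pi}{2z}\e^{-2z}$), so the identity $\nu_m=\xi_{m,\alpha}\xi_{m,\beta}$ forces $\kappa$ to lie exponentially close to a zero of either $\xi_{m,\alpha}$ or $\xi_{m,\beta}$. The hypothesis $E_{m,\beta}\neq -\alpha^2/4$ prevents these two sets of roots from colliding as $d\to\infty$, which is what makes a clean implicit function theorem argument possible.

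For the first family $|m|\leq M_{\alpha,R_d}$, I would rewrite the spectral equation as
\[
\xi_{m,\alpha}(\kappa)=\frac{\nu_m(\kappa,d)}{\xi_{m,\beta}(\kappa)}.
\]
For $\kappa$ near $\alpha/2$ the denominator is bounded away from zero (its eventual vanishing is precisely the excluded case), while the right-hand side is $O(\e^{-2\kappa R_d})$. A standard implicit function / monotonicity argument using $\partial_\kappa\xi_{m,\alpha}<0$ then shows that the root of $\eta_m=0$ lies within $O(\e^{-2\kappa R_d})$ of the unique root $\kappa_{m,\alpha}$ of $\xi_{m,\alpha}$. Invoking Remark~\ref{rem-1}B for $H_{\alpha,R_d}$ then gives $-\kappa^2 = -\tfrac{\alpha^2}{4}+\tfrac{m^2-1/4}{d^2}+O(d^{-4})$, and the exponential correction is swallowed by the $o(d^{-2})$ remainder, yielding the first line of \eqref{eq-efHepsilon}.

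For the second family $|m|\leq M_{\beta,R}$, I would treat $\varepsilon=\e^{-2d\kappa_{m,\beta}}$ as a small parameter and study $F(\kappa,\varepsilon):=\eta_m(\kappa,d)$. At $\varepsilon=0$ one has $\nu_m=0$, $F=-\xi_{m,\alpha}\xi_{m,\beta}$ with $\xi_{m,\alpha}(\kappa)\to \alpha/(2\kappa)-1$, which is nonzero at $\kappa=\kappa_{m,\beta}$ by hypothesis, so $\kappa_{m,\beta}$ is a simple root of $F(\cdot,0)$ with $\partial_\kappa F(\kappa_{m,\beta},0)=-(\alpha/(2\kappa_{m,\beta})-1)\xi'_{m,\beta}(\kappa_{m,\beta})\neq 0$. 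The implicit function theorem produces $\kappa(\varepsilon)=\kappa_{m,\beta}+\delta$ with $\delta=O(\varepsilon)$. To extract the leading coefficient, one expands
\[
\nu_m(\kappa_{m,\beta},d)\sim \frac{\pi\alpha\beta R\,I_m^2(\kappa_{m,\beta} R)\e^{-2\kappa_{m,\beta}R}}{2\kappa_{m,\beta}}\,\varepsilon
\]
using the leading Bessel asymptotic, and similarly $\xi_{m,\alpha}(\kappa_{m,\beta})=\alpha/(2\kappa_{m,\beta})-1+o(1)$. Solving $\nu_m \approx \xi_{m,\alpha}(\kappa_{m,\beta})\,\xi'_{m,\beta}(\kappa_{m,\beta})\,\delta$ for $\delta$ and substituting into $-\kappa^2=E_{m,\beta}-2\kappa_{m,\beta}\delta+O(\delta^2)$ delivers exactly $w_m\varepsilon$ with the constant in \eqref{eq-w}.

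The main obstacles are technical rather than conceptual. First, the Bessel asymptotics $K_m(z)^2=\tfrac{\pi}{2z}\e^{-2z}(1+O(z^{-1}))$ and $(K_mI_m)(z)=\tfrac{1}{2z}(1+O(z^{-2}))$ have to be applied with enough uniformity in $d$ (and $\kappa$ in a small real neighbourhood) to upgrade the formal leading-order expansion into a genuine $o(\varepsilon)$, respectively $o(d^{-2})$, remainder. Second, one must verify that the two descriptions in \eqref{eq-efHepsilon} actually exhaust all eigenvalues of $H_{\alpha,\beta,d}$ for $d$ large, which amounts to a root-counting argument for $\eta_m(\cdot,d)$ on $(0,\infty)$ based on $\partial_\kappa\eta_m$ and the large-$\kappa$ behaviour; this requires particular care when a given $m$ appears in both families simultaneously, since one then has two distinct asymptotic roots from the same index.
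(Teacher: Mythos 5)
Your proposal follows essentially the same route as the paper: both treat the spectral equation $\eta_m(\kappa,d)=\nu_m-\xi_{m,\alpha}\xi_{m,\beta}=0$ as a perturbation of the decoupled equations, exploit the exponential smallness of $\nu_m$ (uniformly in $d$) together with the non-degeneracy condition $\xi_{m,\alpha}(\kappa_{m,\beta})\to\alpha/(2\kappa_{m,\beta})-1\neq0$ to locate a simple root near $\kappa_{m,\beta}$, and extract $w_m$ from the linearization $\nu_m\approx\xi_{m,\alpha}\,\xi'_{m,\beta}\,\delta$ combined with $-\kappa^2=E_{m,\beta}-2\kappa_{m,\beta}\delta+O(\delta^2)$, while the roots near $\kappa_{m,\alpha}$ are handled by the single-circle asymptotics of Remark~\ref{rem-1}B. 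The root-counting issue you flag as an obstacle is precisely what the paper settles via the $\kappa\to0$ and $\kappa\to\infty$ asymptotics of $\xi_m$ and $\nu_m$, so your outline is consistent with the published argument.
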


\begin{remark}   In fact, $\epsilon_d $   reflects the asymptotics of $2(M_\alpha +M_\beta )+2$ eigenvalues of $H_d$. However, since  $\epsilon_d $  converge to $E_{m, \alpha  }$ and $E_{m,\beta}$ we leave the labelling inherited from the discrete eigenvalues of the single circle Hamiltonians.
\end{remark}

\begin{proof} The analysis is based on investigating spectral equation  (\ref{eq-speqC}) which reads
\begin{equation}\label{eq-spectmain}
  \eta_m (\kappa , d ) = \nu_m (\kappa , d) - \xi_m (\kappa , d) =0 \,,
\end{equation}
where
$$
 \xi_m (\kappa , d):=   \xi_{m, \alpha , d  } (\kappa ) \xi_{m, \beta} (\kappa )\,.
$$
First, assume that $|m|\leq M_{\beta , R}$. Then for $d$ large enough we have
$|m|\leq M_{\alpha , R_d}$. Combining equations (\ref{eq-xi}) and (\ref{eq-nu}) with the  formulae
(\ref{eq-asyminftyI}) and (\ref{eq-asyminftyK}) we get the following asymptotics for $\kappa \to \infty $ and any $m\in \Z$:
\begin{equation}\label{eq-asymnuxi}
  \begin{cases}
    \xi_{m} (\kappa , d ) = 1 -\frac{\alpha+\beta }{2\kappa } +O (\kappa^{-2})\,, \\
    \nu _{m} (\kappa , d ) = \frac{\alpha \beta }{4\kappa^2} \mathrm{e }^{-2d\kappa} \big(1+o_\kappa (1) \big)    \,; \\
  \end{cases}
\end{equation}
the  error terms in the above expressions are uniform with respect to  $d>C$ where $C$ is a positive
number \footnote{Note that in this proof $C$ denotes a positive
constant which can change from line to line}. The symbol $o_\kappa (1)$ donotes that the asymptotics understood
with respect to $\kappa $. On the other hand, for $\kappa \to 0 $ we have
\begin{equation}\label{eq-asymnuxi-0}
  \xi_{m} (\kappa , d ) = \begin{cases}
    \big( \frac{\alpha R_d }{2m } - 1\big)  \big( \frac{\beta R }{2m } - 1 \big)\big(1+o_\kappa (1) \big)
    \,, & m\neq 0  \\
     \alpha \beta  \log (\kappa R)  \log (\kappa R_d ) RR_d \big(1+o_\kappa (1) \big)
          \,, & m=0 \, \\
  \end{cases}
\end{equation}
and
\begin{equation}\label{eq-asymnunu-0}
 \nu_{m} (\kappa , d ) =  \begin{cases}
     \frac{\alpha }{4m^2} \frac{R^{2m+1}}{R_d^{2m+1}}\big(1+o_\kappa (1) \big)
    \,, & m\neq 0  \\
    \alpha \beta   \log^2 (\kappa R_d) RR_d \big(1+o_\kappa (1) \big)
          \,, & m=0 \,, \\
  \end{cases}
\end{equation}
where all error terms are uniform with respect to $d>C$. We have $\nu_m (\kappa, d ) >0$. Moreover, $\nu_m (\kappa, d ) \to 0 $ as $d\to \infty $ and the limit
is uniform with respect to $\kappa >C$.
It follows from (\ref{eq-asymnuxi-0}) and (\ref{eq-asymnunu-0}) that if $m\neq 0$ then $\xi_m (0, d )> \nu_m (0, d )$ for $d$ large enough. If $m=0$ then  the corresponding limits for $\kappa \to 0$ do not exist, however,
$\xi_m (\kappa , d )> \nu_m (\kappa , d )$ holds for $\kappa $ from a neighbourhood of $0$ and $d$ large enough. The function $\xi_m (\kappa, d)$ has two roots:
$\kappa_{m, \alpha } $ and $\kappa_{m, \beta }$. Moreover  $\xi_m (\kappa, d ) \to 1 $  for $\kappa \to \infty $
and the limit is uniform with respect to
$d >C$. It follows from the discussed properties of $\xi_m$ and $\nu_m$ and their continuity that equation (\ref{eq-spectmain}) has at least two solutions. Furthermore,
for $d\to \infty $ we have $\xi_m (\kappa, d ) = (\alpha /(2\kappa) -1 +o_d (1)) \xi_{m, \alpha }(\kappa )$
and the error is uniform with respect to $\kappa >C$. This implies that (\ref{eq-spectmain}) has exactly two
solutions for $\kappa >C$ which we denote as $\kappa_{m, \alpha }(d)$ and $\kappa_{m, \beta}(d)$ and  they
 approach to $\kappa_{m, \alpha }$ and $\kappa_{m, \beta  }$ as $d\to \infty$.  Note that, since $C$ can be chosen arbitrary small
we can conclude, in view of the behaviour of $\nu_m$ and $\xi_m$
in a neighbourhood of $0$, that (\ref{eq-spectmain}) has  exactly two solutions for $\kappa \in (0, \infty )$. Let us
 consider $\kappa_{m, \beta  }(d)$. We have
$$
\kappa_{m, \beta  }(d)= \kappa_{m, \beta  } + \delta_{m, \beta  } (d)\,,
$$
where $\delta_{m, \beta} (d)$ converges to $0$ as $d\to \infty $.  Inserting $\kappa_{m, \beta  }(d)$ to $\eta (\kappa , d)$ one gets
\begin{equation}\label{eq-eta1}
\eta (\kappa_{m, \beta  }(d), d)= \nu_m (\kappa_{m, \beta } (d), d  ) - \xi_{m, \alpha } (\kappa_{m, \beta  })  \xi'_{m, \beta  } (\kappa_{m, \beta  } ) \delta_{m, \beta }+o(\delta_{m, \beta })\,.
\end{equation}
The equation
\begin{equation}\label{eq-eta}
\eta (\kappa_{m, \beta  }(d), d)=0
\end{equation}
 and the behaviour of $\nu_m (\kappa_{m, \beta } (d), d  )$ imposes
$d\delta _{m, \beta }(d)\to 0 $ as $d\to \infty $. Therefore, we get
$$
 \nu_m (\kappa_{m, \beta } (d), d  ) = \frac{\pi \alpha \beta R }{2 \kappa_{m, \beta } }  \e^{-2\kappa_{m,\beta } R} I_m (R\kappa _{m, \beta })^2  \varepsilon + o(\varepsilon)\,.
$$
Implementing the above expression and (\ref{eq-eta1}) to (\ref{eq-eta}) and comparing appropriated terms  leads to (\ref{eq-w}). To derive the second eigenvalue
we employ  (\ref{eq-asympinftyev}) together with the asymptotics of $\xi_{m, \alpha }$ which depends on $d$ as well.
The analogous analysis as above  establishes the  asympotics of eigenvalues localized near $-\alpha^2 /4$, see the first line of (\ref{eq-efHepsilon}).

If $|m|>M_{\beta, R}$ and $|m|\leq M_{\alpha, R_d}$ then $\xi_m (\kappa, d )=0$ has only one solution $\kappa_{m, \alpha }$.
Then repeating the above steps one shows the  existence of one solution of the spectral equation; this solution admits the asymptotics specified in the first line of (\ref{eq-efHepsilon}).
\end{proof}

The result of the above theorem corresponds to the phenomena known for regular potentials. It was shown in \cite{Harrell}
that the introducing a
second well to the  single well system leads to the splitting of original eigenvalues and the corresponding spectral gaps can be expressed by the  current. Consequently,
the asymptotics of the gaps, if the wells are  separated by a large distance,
is determined by the exponential decay of  eigenvectors. Theorem~\ref{th-2} shows that the introducing  interaction supported by circle $C_{R_d}$
to the system governed by $H_{\beta, R} $  leads to the shifting of  original energies $E_{m,\beta }$
and this spectral shifting is determined by exponential decay of corresponding eigenfunctions.

One the other hand, the system governed by $H_{\alpha , R_d }$ admits eigenvalues $E_{m, \alpha }$  which depend  on $d$, see (\ref{eq-asympinftyev}). Formula (\ref{eq-efHepsilon}) shows that the leading terms of this eigenvalues asymptotics are preserved
if we introduce  also interaction supported by $C_R$.

\section{Appendix}
We complete here the asymptotics of functions contributing to the spectral equations, see~\cite{AS}.
Namely, for $z \to \infty$ and $m\in \Z$ we have
\begin{equation}\label{eq-asyminftyI}
  I_m (z) = \frac{\e^z}{\sqrt{2\pi } } \left( 1-\frac{4m^2 -1}{8z }+O(z^{-2})\right)\,,
\end{equation}
and
\begin{equation}\label{eq-asyminftyK}
  K_m (z) = \sqrt{\frac{\pi }{2 z}}\, \e^{-z} \left( 1+ \frac{4m^2 -1}{8z }+O(z^{-2})\right)\,.
\end{equation}
Furthermore, for $z\to 0 $ we have
\begin{equation}\label{eq-asym0I}
  I_{m} (z) \sim \frac{1}{\Gamma (m+1)} \left(\frac{z}{2}\right)^m \,,\quad m\in \Z\,
\end{equation}
and
\begin{equation}\label{eq-asym0K}
  \begin{cases}
    K_{m} (z) \sim \frac{\Gamma (m)}{2} \left(\frac{2}{z}\right)^m \,,\quad  m\neq 0 \\
   K_{0} (z) \sim -\ln ( z/2)    \,.
  \end{cases}
\end{equation}
Recall $\Gamma (m)= (m-1)!$.

For the asymptotics of derivatives the following formulae will be useful
\begin{equation}\label{eq-recur1}
  I'_m (z) = \frac{I_{m-1}(z)+I_{m+1 }(z)}{2}
\end{equation}
and \begin{equation}\label{eq-recur2}
  K'_m (z) =  - \frac{K_{m-1} (z) +K_{m+1 } (z)}{2}\,.
\end{equation}
Furthermore, since $Z_m = Z_{-m}$, where $Z_m = I_m, K_m$ we, for example, have
\begin{equation}\label{eq-Ki}
   I'_0 (z) =  I_1 (z)\,,\quad \quad     K'_0 (z) = - K_1 (z)\,.
\end{equation}

\subsection*{Acknowledgements}
The author is grateful to the referee for reading carefully
the first version of the paper and for the useful remarks and  suggestions of changes. The work was  supported
by the project with the decision No. DEC-2013/11/B/ST1/03067 of the Polish National Science Centre.

%

\providecommand{\bysame}{\leavevmode\hbox
to3em{\hrulefill}\thinspace}
\providecommand{\MR}{\relax\ifhmode\unskip\space\fi MR }
\providecommand{\MRhref}[2]{%
  \href{http://www.ams.org/mathscinet-getitem?mr=#1}{#2}
} \providecommand{\href}[2]{#2}

\end{document}